\titleformat{\section}{\normalfont\large\bfseries}{\thesection}{1em}{}
\titleformat{\subsection}{\normalfont\normalsize\bfseries}{\thesubsection}{1em}{}
\titleformat{\subsubsection}{\normalfont\normalsize\bfseries}{\thesubsubsection}{1em}{}
\newtheorem{theorem}{Theorem}
\title{\textbf{Improving Surrogate Model Robustness to Perturbations for Dynamical Systems Through Machine Learning and Data Assimilation}}
\author[]{Abhishek Ajayakumar}
\author[]{Soumyendu Raha}
\affil[]{Department of Computational and Data Sciences, Indian Institute of Science, Bangalore 560012, India}
\date{} 
\begin{document}

\maketitle

\begin{abstract}
\textcolor{black}{Many real-world systems are modelled using complex ordinary differential equations (ODEs).} However, the dimensionality of these systems can make them challenging to analyze. Dimensionality reduction techniques like Proper Orthogonal Decomposition (POD) can be used in such cases. However, these reduced order models are susceptible to perturbations in the input. \textcolor{black}{We propose a novel framework that combines machine learning and data assimilation techniques to improving surrogate models to handle perturbations in input data effectively.} \textcolor{black}{Through rigorous experiments on dynamical systems modelled on graphs, we demonstrate that our framework substantially improves the accuracy of surrogate models under input perturbations. Furthermore, we evaluate the framework’s efficacy on alternative surrogate models, including neural ODEs, and the empirical results consistently show enhanced performance.}

\end{abstract}
\textbf{Keywords:}\;Surrogate model $\cdot$ Machine learning $\cdot$ Dimensionality reduction $\cdot$ Reaction-diffusion system
\section{Introduction}

Complex networks provide valuable insights into real-world systems, such as the spread of epidemics \cite{Brockmann_Helbing_2013} and the understanding of biochemical and neuronal processes \cite{Maslov_Ispolatov_2007}. However, modeling these systems becomes computationally complex when dealing with large state spaces. To address this challenge, methods like Proper Orthogonal Decomposition (POD) (\cite{rathinam_petzold_2003}, \cite{articlePODfield1}, \cite{articlePODfield2}, \cite{articlePODfield3}) are employed to capture patterns while operating in reduced dimensions. \textcolor{black}{While dimensionality reduction techniques like POD are effective, their sensitivity to data necessitates robust surrogate models that can maintain accuracy under input perturbations.} \textcolor{black}{Consequently, the question arises regarding how to enhance a surrogate model to accommodate perturbed data sets. This study proposes a framework designed to improve the robustness of surrogate models, ensuring their reliability under varying input conditions.}

\textcolor{black}{
Surrogate models serve as approximations of complex real-world systems, which are often modeled using ordinary differential equations (ODEs) or partial differential equations (PDEs). In many cases, the complexity of these systems makes it difficult to fully capture their underlying processes. To address the challenges of modeling high-dimensional and chaotic systems, several machine learning (ML) approaches have been proposed to construct surrogate models that provide accurate and computationally efficient approximations. 
Several ML-based approaches have been developed to construct surrogate models that provide accurate and computationally efficient approximations. These include analog models \cite{Lguensat_Tandeo_Ailliot_Pulido_Fablet_2017}, recurrent neural networks (RNNs) \cite{Park, park2010time}, residual neural networks that approximate resolvents \cite{brajard2020combining, dueben2018challenges}, and differential equation-based models such as in \cite{fablet2018bilinear, long2018pde, bocquet2019data, bocquet2020bayesian}.
}A notable study \cite{brajard2020combining} integrates machine learning and data assimilation (DA) to improve predictions in scenarios with sparse observations. However, the DA step is computationally expensive, limiting its practical applicability. To address model inaccuracies, researchers have explored alternative approaches such as weak-constraint DA methods \cite{sakov2018iterative} and ML-DA frameworks \cite{Farchi_2021}.

    \par \textbf{\emph{The contributions of our paper can be summarized as follows:}} 
    \textcolor{black}{
    \begin{enumerate}
        \item We introduce a novel framework (Figure~\ref{Methodology}) that improves the robustness of surrogate models against input perturbations by integrating data assimilation and machine learning techniques.
        \item We conduct extensive experiments on dynamical systems represented as graphs, specifically the diffusion system ($\mathbb{D}$) and the chemical Brusselator model ($\mathbb{C}$) discussed in Section~\ref{Background}.
        For dynamical systems on graphs, we propose a dynamic optimization step, described in Sections ~\ref{dynamiclinear} and ~\ref{dynamoptimrd}, to obtain a sparse graph and reduce memory complexity.  
        \begin{enumerate}
            \item
            For the diffusion equation ($\mathbb{D}$) on graphs (Section~\ref{spatio-temporal}), we derive conditions (Theorem~\ref{connec_theorem}) to guide the constraints in our dynamic optimization step.
            \item  Section~\ref{dynamoptimrd} presents the dynamic optimization step for reaction-diffusion systems on graphs, detailing the required constraints.
        \end{enumerate}
        \item Section~\ref{numerical_results} presents experimental results for the framework on both linear and non-linear dynamical systems represented on graphs, using the POD-based surrogate model. Empirical results in Table~\ref{tab:rmse_comparison} demonstrate that our framework enhances surrogate model performance under input perturbations.
        \item In Section~\ref{benchmark}, we demonstrate how our framework can be integrated into a general setting to reinforce neural ODE-based surrogate models trained on data-driven systems. The improved models exhibit enhanced performance and robustness against input perturbations, outperforming both POD-based models and the original neural ODE solutions, as shown in Figures~\ref{linearplot}, \ref{rd_fig}, and \ref{allcomparenode}. 
    \end{enumerate}
    } 

\section{Background}
\label{Background}
We present the fundamental tools and techniques essential for understanding the methodologies discussed in this paper.
\subsection{Orthogonal collocation method}
The orthogonal collocation method is a well-established numerical technique for solving differential equations, particularly in dynamic optimization problems. It provides a versatile approach applicable to both ordinary and partial differential equations \cite{HEDENGREN2014133, nlp_siam}.
\par \textcolor{black}{The proposed method entails partitioning the problem domain into a discrete set of collocation points, ensuring that the polynomial approximation of the vector field $f(t, z(t))$ satisfies the orthogonality condition. Common choices for collocation points include Legendre, Chebyshev, and Gaussian points.}
\par \textcolor{black}{The method enforces the differential equations at the collocation points, resulting in a system of algebraic equations that can be solved using numerical techniques such as Newton’s method or direct solvers.} \textcolor{black}{The choice of collocation points and basis functions significantly impacts the method's accuracy and efficiency.} To illustrate the method, we provide an example using the Lotka-Volterra system.

\begin{align*}
    \frac{dx}{dt} &= \alpha x - \beta xy \\
    \frac{dy}{dt} &= \delta x y - \gamma y.
\end{align*}
Here $x$ and $y$ denote the prey and predator population. The parameters $\alpha, \beta$ determine the prey growth
rate and $\delta, \gamma$ determine the predator growth rate. \textcolor{black}{If we consider a single collocation element with two nodes per element, the orthogonal collocation method \cite{HEDENGREN2014133} solves the following system of equations to determine the prey and predator populations at time $( t_s )$, denoted as $( x_{t_s}, y_{t_s} )$.}

$t_s N_{2\times 2} \left(\begin{array}{c}
\alpha x_{t_s} - \beta x_{t_s} y_{t_s}  \\
\delta x_{t_s} y_{t_s} - \gamma y_{t_s} 
\end{array} \right) = \left(\begin{array}{c}
x_{t_s}  \\
y_{t_s} 
\end{array} \right) - \left(\begin{array}{c}
x_{t_0}  \\
y_{t_0} 
\end{array} \right)$. $N_{2\times2} = \left(\begin{array}{cc}
0.75 & -0.25 \\
1 & 0
\end{array}\right).$ This method is applied in the dynamic optimization step of our framework (Section \ref{framework}).

\subsection{Spatio temporal propagation in graphs}
\label{spatio-temporal}
Spatiotemporal propagation in complex networks has been widely studied across disciplines such as physics, biology, and neuroscience. Complex networks comprise interconnected nodes whose interactions give rise to dynamic processes. Understanding the mechanisms governing information, signal, or dynamic propagation within these networks is crucial for analyzing their complex behaviors and emergent properties.
\par Spatiotemporal propagation describes the transmission or diffusion of information, signals, or dynamics across interconnected nodes and edges in a complex network. It examines how localized events, disturbances, or modifications in one region of the network evolve and influence other nodes or areas. This phenomenon has significant implications across diverse fields, including physics, biology, and neuroscience. The following examples highlight key applications of complex networks.
\begin{enumerate}
    \item \textbf{Diffusion Processes\; $(\mathbb{D})$:} \textcolor{black}{The study of diffusion phenomena, including heat transfer and molecular diffusion, aids in modeling and optimizing processes involving transport and dispersion. By considering the normalized Laplacian matrix of the graph, the heat equation can be represented as an equivalent discrete dynamical system. Normalized Laplacian matrix $\mathcal{L} = D^{-1/2} L D^{-1/2}$, where $D$ is the diagonal matrix of degrees and $L = D - A$ is the Laplacian matrix of the graph.}
    $$
    \frac{dF}{dt} = -\mathcal{L} F, \;\; 
    $$
    Here $F \in \mathbb{R}^n, F(x,t)$ denotes the temperature at node $x$ and time $t$. 

    \item \textcolor{black}{\textbf{Chemical Brusselator model}\;($\mathbb{C}$): 
    Introduced in 1971, the chemical Brusselator model exemplifies an autocatalytic chemical reaction system \cite{LANDSBERG1972, cencetti_clusella_fanelli_2018}. Its dynamics are governed by the equations:}
    \begin{equation}
    \left\lbrace \begin{array}{l}
\dot{x}_i \;=\;a-\left(b+d\right)x_i +c\;x_i^2 y_i -D_x \;\sum_j L {\;}_{\mathrm{ij}} x_j \\
{\dot{y}_i \;=\;{bx}}_i -c\;x_{\;i}^2 y_i -D_y \;\sum_j L_{\mathrm{ij}} \;y_j \;\;\;\;\;\;\;
\end{array}\right.
\label{Rd dynamics}
\end{equation}
     \item \textbf{Epidemic Spread\;$(\mathbb{E})$:}  Understanding how infectious diseases propagate through social networks can aid in designing effective strategies for disease control, information dissemination, and opinion formation. The SIS (susceptible-infected-susceptible) model, used to study epidemic spreading, is described as follows:
    $$
    \frac{dx_i}{dt} = -x_i + \sum_{j=1}^N A_{ij}(1-x_i)x_j.
    $$ $A_{ij}$ represents the $(i, j)$-th entry of the adjacency matrix of the graph. $N$ denotes the number of nodes.

    \item \textbf{Neural Dynamics}\;($\mathbb{N}$): The study of neural activity propagation in brain networks provides insights into cognition and neurological disorders. One such system, described in \cite{Hens2019}, follows the equation:
    $$
    \frac{dx_i}{dt} = -Bx_i + C \tanh{x_i} + \sum_{j=1}^N A_{ij}\tanh{x_j}.
    $$

\end{enumerate}

\subsection{Proper Orthogonal Decomposition for Dynamical systems} 
\label{POD-steps}
\textcolor{black}{This section provides a concise overview of the application of Proper Orthogonal Decomposition (POD) to initial value problems in dynamical systems \cite{rathinam_petzold_2003}.} Given a dataset $\mathcal{D}$ consisting of a collection of points $x^c_i$, where $x^c_i \in \mathbb{R}^n$ denotes the state of the system at time $t_i$ for a particular trajectory $c$, the POD method seeks a subspace $S \subset \mathbb{R}^n$ such that 
$$
\vert\vert \mathcal{D} - \rho_S \mathcal{D} \vert\vert^2  
$$
is minimized. $\rho_S$ is the orthogonal projection onto the subspace $S$ and $\rho_S \mathcal{D}$ denotes the projected data set.

$S \subset \mathbb{R}^n$ is the best $k$ dimensional approximating affine subspace, with the matrix $\rho$ of projection consisting of leading $k$ eigenvectors of the covariance matrix ($\bar{R}$). The subspace for a dataset $\mathcal{D}$ is uniquely determined by the projection matrix $P$ and the mean $\bar{x}$, $P = \rho^T \rho$.
\begin{align*}
\bar{R} &= \sum_{c=1}^{N_T} \int_{0}^T (x^c(t) - \bar{x})(x^c(t) - \bar{x})^T dt \\     
\bar{x} &= \frac{1}{N_{T} T} \sum_{c=1}^{N_T} \int_{0}^T x^c(t) dt.
\end{align*}

$N_T$ denotes the number of trajectories. 

An asymptotic and sensitivity analysis of POD is presented in \cite{rathinam_petzold_2003}. Consider a dynamical system in \( \mathbb{R}^n \) governed by a vector field \( f \):

\[
\dot{x} = f(x,t).
\]
\\
The reduced-order model (ROM) vector field is defined as:

\[
\dot{z} = \rho f(\rho^T z + \bar{x}, t) = f_{a}(z, t).
\]
Thus an initial value problem for the system $\dot{x} = f(x,t)$ with $x(0) = x_0$ using the projection method is given by, 
\begin{equation}
\label{equivalent_system}
\dot{\hat{x}} = Pf(\hat{x}, t); \;\; \hat{x}(0) = \hat{x}_0 = P(x_0 - \bar{x}) + \bar{x}.  
\end{equation}
\\
Here, \( \hat{x}_0 \) is the projection of \( x_0 \) onto the subspace \( S \).
 \textcolor{black}{The sensitivity of the POD projection \( P \) to changes in the dataset \( \mathcal{D} \) is quantified by the following proposition from \cite{rathinam_petzold_2003}:
}  
\\
\textbf{Proposition:} (Rathinam and Petzold \cite{rathinam_petzold_2003}) Consider applying POD to a data set $\mathcal{D}$ to find the best approximating $k(< n)$ dimensional subspace. Let the ordered eigenvalues of the covariance matrix of the data $\mathcal{D}$ be given by $\Tilde{\lambda}_1 \geq \cdots \geq \Tilde{\lambda}_n$. Suppose $\Tilde{\lambda}_k > \Tilde{\lambda}_{k+1},$ which ensures that $P(\mathcal{D})$ is well defined. Then 
\begin{equation}
S_k(\mathcal{D}) = \text{max}_{i\leq k, j\leq n-k} \,\sqrt{2}\; \frac{\sqrt{\Tilde{\lambda}_i + \Tilde{\lambda}_{j+k}}}{\Tilde{\lambda}_i - \Tilde{\lambda}_{j+k}} \sqrt{\Tilde{\lambda}_1 + \cdots + \Tilde{\lambda}_n} \geq \sqrt{2}. 
\label{sensitivity_pod}
\end{equation}

\subsection{Filtering}
\label{filteringsteppara}
\textcolor{black}{
This section introduces the filtering problem and outlines the key steps in the filtering framework. For further details on the concept of filtering, see \cite{lewis_lakshmivarahan_dhall_2009, resampling_matlab}.} 
\textcolor{black}{
Rooted in Bayesian inference, optimal filtering estimates the state of time-varying systems under noisy measurements. Its objective is to achieve statistically optimal estimation of the system state. Optimal filtering follows the Bayesian framework for state estimation, integrating statistical optimization with Bayesian reasoning to improve applications in signal processing, control systems, and sensor networks.}

\textcolor{black}{In Bayesian optimal filtering, the system state consists of dynamic variables such as position, velocity, orientation, and angular velocity. Due to measurement noise, observations do not yield deterministic values but rather a distribution of possible states, introducing uncertainty. The system state evolution is modeled as a dynamic system with process noise capturing inherent uncertainties in system dynamics. While the underlying system is often deterministic, stochasticity is introduced to represent model uncertainties.}

The system's state evolves according to:

\[
x_{k+1} = M(x_k) + w_{k+1},
\]

where \( x_k \in \mathbb{R}^n \) is the system state at time \( t_k \), and \( w_{k+1} \in \mathbb{R}^n \) represents the model error. The observations are given by:

\[
z_k = h(x_k) + v_k,
\]

where \( v_k \) represents the observation noise.

The filtering step estimates the state at time \( t_{k+1} \) (\( x_{k+1} \)) based on observations up to \( t_{k+1} \). 


\section{Problem statement}
\label{probstatement}
\textcolor{black}{Surrogate modeling techniques, such as the reduced-order model (ROM) discussed in Section \ref{POD-steps}, provide computational efficiency but are highly sensitive to input variations, as indicated in Equation \ref{sensitivity_pod} 
Consequently, when initialized with a new random condition, the surrogate model (\(M(\cdot)\)) efficiently approximates the system trajectory but may fail to capture deviations from the true state. \\
The surrogate model is used to obtain compressed representations of state vectors at various timesteps, which are treated as noisy observations. The observations are compressed for better memory efficiency if the model operates on the state dimension, as in the neural ODE surrogate model (Section \ref{benchmark}). Let $h: \mathbb{R}^n \rightarrow \mathbb{R}^k$ (where $k < n$) denote the state observation relationship. $e_{mh}(t_{k+1})$ and $e_{ml}(t_{k+1})$ denote the high and low wavelength components of error, capturing deviations between the true state and the state obtained from the surrogate model ($M(x_k)$) at time $t_{k+1}$. $e_{oh}(t_k)$ and $e_{ol}(t_k)$ denote the high and low-wavelength components of error that capture the difference between the observation and the state at time $t_k$.
 $x_{k+1}$ denotes the true state of the system at time $t_{k+1}$. $M(\cdot)$ denotes the forward model. For instance, if the forward model $M(\cdot)$ uses the Euler forward scheme and the surrogate model applies POD to the vector field $\dot{x} = f(x(t))$, then:
\[
M({x}_k) = {x}_k + h P f({x}_k, t_k),
\]
where $h$ denotes the step size. The state forward dynamics and state observation relationship are given by:} 
\begin{align*}
    x_{k+1} &= M({x}_k) + e_{mh}(t_{k+1}) + e_{ml}(t_{k+1}), \\
    z_k     &= h (x_k) + e_{oh}(t_k) + e_{ol}(t_k).
\end{align*}



\section{Proposed Framework}
\label{framework}

\begin{figure}
    \centering
    \includegraphics[width =0.9\linewidth, height = 8cm]{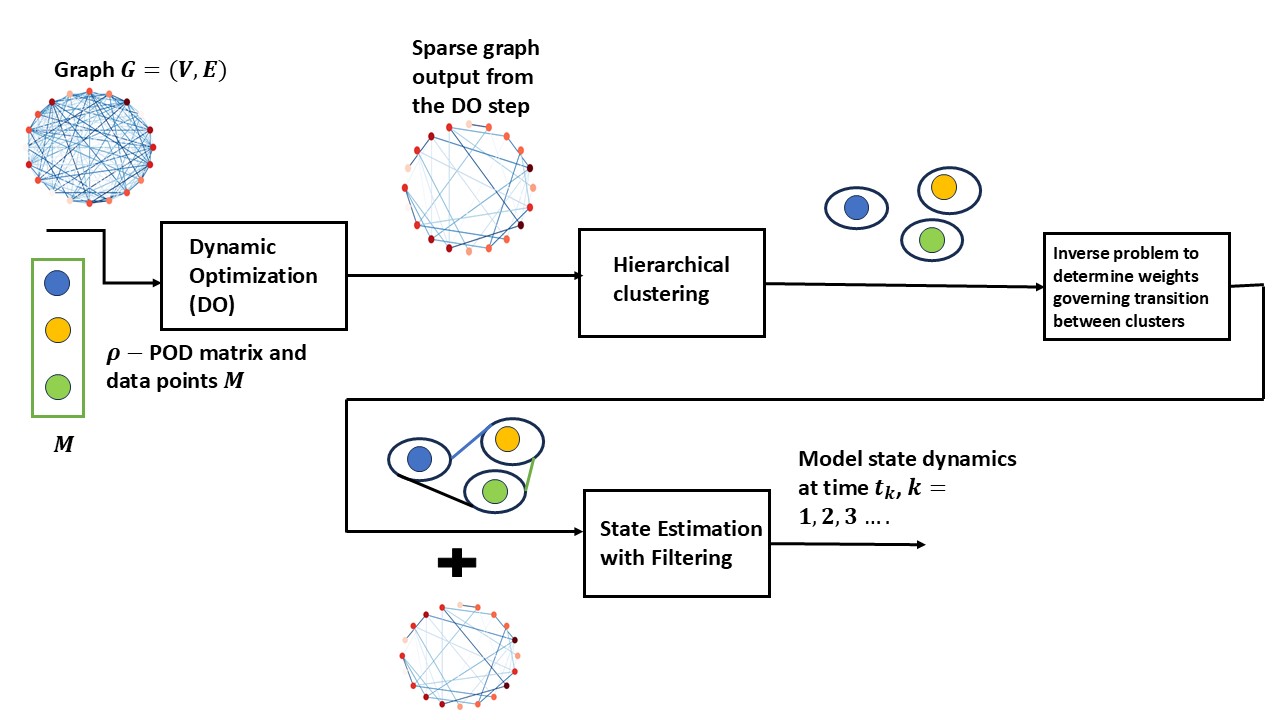}
    \caption{\textcolor{black}{Diagram illustrating the key steps of the proposed methodology, including dynamic optimization, hierarchical clustering, solving the inverse problem, and filtering. See Section~\ref{framework} for a detailed explanation.}}
    \label{Methodology}
\end{figure}

\textcolor{black}{The key steps of the framework, as illustrated in Figure \ref{Methodology}, are as follows. For brevity, the framework is discussed in the context of dynamical systems represented on graphs. For dynamical systems that do not involve graphs, the dynamic optimization step is not required. \textcolor{black}{The framework's applicability to general modeling scenarios is demonstrated in Section \ref{benchmark}, where a neural ODE-based surrogate model is used to learn from data.}}   
\begin{enumerate}
    \item \textbf{Dynamic optimization.}
    \textcolor{black}{For dynamical systems on graphs, a dynamic optimization problem is formulated to extract a sparse graph from known solution trajectory snapshots. The sparse graph improves memory efficiency in the filtering step (Step 4) of the framework. The original graph can be replaced with the sparse graph, further enhancing memory efficiency. This step outputs the graph Laplacian matrix of the sparse graph, given by \( L_1 = B^T \text{diag}(w^*) B \), where the weights \( w^{*} \) for linear and nonlinear dynamical systems on graphs are determined by solving the optimization problems detailed in Sections \ref{dynamiclinear} and \ref{dynamoptimrd}.
 }
 
    \item \textbf{Hierarchical clustering.} \textcolor{black}{This step quantifies deviations between the surrogate model (e.g., POD) and the actual system trajectories. Given a dataset \( \mathcal{D} = \{x_1, x_2, \ldots, x_T\} \) and surrogate model predictions \( \mathcal{P} = \{z_1, z_2, \ldots, z_T\} \), we apply hierarchical clustering to partition \( \mathcal{P} \) into \( p \) clusters. The assigned cluster labels are represented as \( Y = \{y_1, y_2, \ldots, y_T\} \), where \( y_t \) denotes the cluster assignment for data point \( z_t \). 
    For each cluster $C_i$, we construct a set of triplets $\{z_j, e_j, \hat{e}_j\}_{j=1}^{N_{C_i}}$, where $N_{C_i}$ is the number of triplets in cluster $C_i$. For reduced order model using POD (Section \ref{POD-steps}), we denote $e_j = x_j - (\rho^T {z}_j + \bar{x}), \hat{e}_j = {z}_j - \rho(x_j - \bar{x})$. The center of cluster $C_i$ is the triplet $\{{z}^{C_i}_{m}, e^{C_i}_m, \hat{e}^{C_i}_m\}$, where $e^{C_i}_m = \frac{\sum^{N_{C_i}}_{j=1} e_j}{N_{C_i}}, \hat{e}^{C_i}_m = \frac{\sum^{N_{C_i}}_{j=1} \hat{e}_j}{N_{C_i}}, {z}^{C_i}_m = \frac{\sum^{N_{C_i}}_{j=1} {z}_j}{N_{C_i}}$. As shown in Figure \ref{cluster}, a visualization of the clusters is provided, with $I_{C_i}$ representing the indices of data points within cluster $C_i$.}
    \item \textbf{Inverse Problem for Estimating Weights Between Clusters.} \textcolor{black}{For an initial condition $x_0$ at time step $t_0$, we obtain a compressed representation of the data point $z_0$. For example, when using the POD method, we get the compressed representation of $x_0$ as ${z}_0 = \rho (x_0 -\bar{x})$. From the clusters formed in Step 3, we assign the cluster to the point $z_0$ based on the following, 
    \begin{equation}
\arg\min_{f \in \{1, 2, \dots, p\}} \left\| z^{C_f}_m - {z}_0 \right\|_2
.   \label{equationinversefirst}     
    \end{equation}
    The initial cluster distribution \( p_0 \in \mathbb{R}^p \) is a one-hot vector, where the index corresponding to \( f \in \{0,1,\ldots,p\} \) is set to 1. The cluster distribution at time \( t \) is denoted by \( p_t \). For successive transitions \( \hat{x}_1 \rightarrow \hat{x}_2 \rightarrow \ldots \rightarrow \hat{x}_T \), it is essential to efficiently determine the cluster of each approximate state \( \hat{x}_i \) generated by the surrogate model at time \( t_i \). This step is performed for the estimation of high wavelength components of errors described in Section \ref{probstatement}.
 The transitioning of cluster distribution ($p_{t}$) is modelled as a Markovian process discussed in \cite{Chung1997}. We consider the nodes of the clusters from Step 2 as nodes of graph $\mathcal{G}_{I} = (\mathcal{V}_{I}, \mathcal{E}_{I}, q)$, $\vert \mathcal{V}_{I} \vert = p$, where the weights of the graph $q$ are unknown. The topology of the graph is considered complete with self-loops, meaning there is an edge between each node of the cluster and an edge from each node to itself. We use a supervised learning approach to determine the weights $q$ within the graph, where the labels $y_{t}$ from Step 2 are used in the optimization objective function with the cross-entropy loss (See \ref{P}).
}

\par \textcolor{black}{The transitions of the cluster distribution is modeled as Markovian, based on the assumption that the states of the ODE exhibit Markovian properties, which is explained below.} The general explicit Runge-Kutta numerical scheme using $n$ slopes (\cite{burden_faires_burden_2016}) for obtaining solution of the differential equation $\dot{x} = f(x(t))$ is given by the following relation $$x_{t+1} = x_t + h b_1 k_1 + h b_2 k_2 + \ldots h b_n k_n,$$
        $$
        k_i = hf(x_i + h \sum_{j=1}^{i-1} a_{ij} k_j).
        $$
From this formulation, it follows that the solution transitions of the ODE adhere to a discrete-time Markov chain:
\[
p(x_{t+1} \mid x_0, x_1, \dots, x_t) = p(x_{t+1} \mid x_t).
\] 
        In a graph $\mathcal{G}_{I}$, a walk is represented as a sequence of vertices $(v_0, v_1, \dots, v_s)$, where each consecutive pair of vertices $(v_{i-1}, v_{i})$ is connected by an edge in $\mathcal{G}_{I}$. In other words, there is an edge between $v_{i-1}$ and $v_{i}$ for all $1 \leq i \leq s$. A random walk is characterized by the transition probabilities \( P(u, v) \), which define the probability of moving from vertex \( u \) to vertex \( v \) in one step. Each vertex \( u \) in the graph can transition to multiple neighboring vertices \( v \). If we consider the transitions between the clusters as Markovian, the transition matrix $P = D^{-1} A$, where $D$ is the diagonal matrix of degrees and $A$ is the adjacency matrix of the graph. For each vertex $u$,
$$\sum_v P(u,v) = 1.$$
\begin{equation*}
P\left(u,v\right)=\;\left\lbrace \begin{array}{ll}
\frac{1}{d_u} & \; \text{if u and v are adjacent},\\
0 & \mathrm{otherwise}
\end{array}\right.    
\end{equation*} If we consider a complete graph on 3 nodes with self-loops as shown in Figure \ref{cluster}, the transition matrix, 
        $$P = \left\lbrack \begin{array}{ccc}
\frac{1}{q_{11} + q_{12} + q_{13}\;} & 0 & 0\\
0 & \frac{1}{q_{22} + q_{23} + q_{12}} & 0\\
0 & 0 & \frac{1}{q_{13} + q_{23} + q_{33}}
\end{array}\right\rbrack \left\lbrack \begin{array}{ccc}
q_{11} & q_{12} & q_{13}\\
q_{12} & q_{22} & q_{23}\\
q_{13} & q_{23} & q_{33}
\end{array}\right\rbrack. $$

\begin{figure}[H]
\centering
    \includegraphics[width=0.7\textwidth]{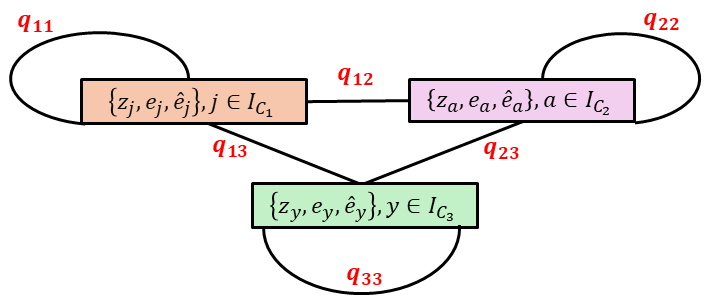} 
\caption{Illustration of triplet organization into clusters, with mutually exclusive sets of indices \( I_{C_i} \) (where \( i = 1, 2, 3 \)). The inverse problem \ref{P} is used to estimate weights \( q_{ij} \).
}
\label{cluster}
\end{figure}
We now pose a constrained optimization problem~\ref{P} to estimate the weights ($q$) responsible for governing the transitions of data points.
\begin{equation*}
\left.
\begin{aligned}
& \textbf{minimize\;\;}_{q \, \in \, \mathbb{R}^{n(n+1)/2}} \;\;&&J = \sum_{t=1}^T \sum_{k=1}^{p} -({y}_t) \log ({p}_t(k))   - (1-{y}_t) \log(1 - ({p}_t(k)))\\
& \textbf{subject\;\; to \;\;} &&{p}_t  = P(q) \; {p}_{t-1} \\
& && q_{ij}  \geq 0 \;\;i,j = 1,2,\ldots p.\\
\end{aligned}
\right\} \tag{P} \label{P}
\end{equation*}
\textcolor{black}{${y}_t$ represents the cluster index of data point $z_t$}. \textcolor{black}{The optimization problem above is solved to find the optimal \( q^{*} \), which can be efficiently obtained using the adjoint method for data assimilation, as detailed in \cite{lewis_lakshmivarahan_dhall_2009}. Given the cluster distribution ($p_k$) at time $t_k$, we estimate components $A_k, R_k$ (Eq. \ref{modelstaterelation}) for the high wavelength component of errors $e_{mh} (t_k), e_{oh}(t_k)$ as follows:
$A_k = \text{diag(}{\sum_{i=1}^p p_k(i) {e}_m^{C_i}}), R_k = \text{diag(}{\sum_{i=1}^p p_k(i) \hat{e}_m^{C_i}}).$
}  \\
\item \textbf{State Estimation with Filtering.} \textcolor{black}{
The goal of filtering is to estimate system states over time using noisy measurements and state observation relations, as discussed in Section \ref{filteringsteppara}. The POD surrogate model defines the evolution of state dynamics and the state-measurement relationship as follows:
\begin{equation}
\label{modelstaterelation}
\begin{cases}
x_{k+1} = M(Pf({x}_k)) + A_{k+1} v_{k+1} + w_{k+1}, &  \\
z_k = \rho (x_k - \bar{x}) + R_k \beta_k + \mu_k, & 
\end{cases}
\end{equation} 
The terms \( A_{k+1} v_{k+1} \) and \( w_{k+1} \) represent the high- and low-wavelength components of model error at time \( t_{k+1} \), respectively. Similarly, \( R_k \beta_k \) and \( \mu_k \) characterize the high- and low-wavelength components of the state-observation error at time \( t_k \). $P = \rho^T \rho$ denotes the projection matrix. 
At a particular time step, the data point at time $t$ $({z}_t)$ is obtained using the ROM $\dot{{z}}_t =  f_a ({z}_t, t)= \rho f(\rho^T {z}_t + \bar{x}, t)$, and the state vector after projection ${x}_t = \rho^T {z}_t + \bar{x}$. 
The matrices \( A_{k+1} \) and \( R_k \) are computed following the methodology outlined in Step 3 of the framework. The term $v_1$ is the solution to the linear system $A_1  v_1 = x_1 - M(Pf(x_0))$. For dynamical systems represented on graphs, the term $x_1$ is computed using an explicit numerical scheme with the sparse graph obtained from the dynamic optimization Step 1. 
In the case of a diffusion equation represented on graphs $(\mathbb{D}$ in Section \ref{spatio-temporal})  with Euler discretization step size $h$, the sparse Laplacian matrix $L_1$ from the output of Step 1 is used instead of matrix $L$ to obtain $x_1 = x_0 - h L_1 x_0$. $\beta_1$ is the solution to the linear system $R_1 \beta_1 = z_1 - \rho(x_1 - \bar{x})$. 
$w_{k+1}$ and $\mu_k$ are modelled as Gaussian with means $\mathbf{0}_{n\times1}, \mathbf{0}_{k \times 1}$ and variances $\zeta_x = \sigma_x I_{n\times n}, \zeta_y = \sigma_y I_{k \times k}.$ The distribution governing the low wavelength components of errors ($w_{k+1}, \mu_k$) is deliberately shaped to exhibit minimal variability, with a centered mean of $\mathbf{0}_{n\times 1}$. This assumption is made with the expectation that the high wavelength component will accurately capture the true nature of the error. 
There exist linear and non-linear filters for state estimation as described in \cite{lewis_lakshmivarahan_dhall_2009}, \cite{resampling_matlab}. When the uncertainty in the state prediction exceeds a threshold, vectors $v_{k+1}$ and $\beta_k$ are updated as solutions of the linear systems defined below:
\begin{align}
& A_{k+1} \; v_{k+1} = x_{k+1} - M(Pf(x_{k})),  \\
& R_{k} \; \beta_k = z_{k} - \rho(x_{k} - \bar{x}).  
\end{align}
The computation of $x_{k+1}$ is performed using an explicit numerical scheme that leverages the state vector obtained from the filter at time step $k$ using the sparse graph.
}
     
\end{enumerate}
  
\section{Dynamic optimization problem formulation for dynamical systems represented on graphs}
 \label{linearsystem}
\textcolor{black}{This section outlines the constraints necessary for the dynamic optimization step (Step 1 in Section \ref{framework}) applied to a linear dynamical system \( \mathbb{D} \) represented on graphs (Section \ref{spatio-temporal}). Section \ref{dynamiclinear} details the dynamic optimization process for linear diffusion systems on graphs. Section \ref{dynamoptimrd} presents the optimization framework for reaction-diffusion systems on graphs, including the corresponding constraint formulations.} \\
 Existing techniques for approximating linear dynamical systems on graphs include \cite{lai2021graph}, \cite{spielman_teng_2011}, and \cite{spielman2009graph}. The approach in \cite{spielman2009graph} constructs spectral sparsifiers by sampling edges according to their resistance \( R_e \). 
 The probability of sampling an edge \( e \) to construct a sparse approximation \( \bar{G} \) of graph \( G \) is given by  
\[
p_e = \frac{w_e R_e}{(n-1)},
\]
where the edge resistance is defined as  
\[
R_e = \| L^{1/2} L_e L^{1/2} \|.
\]
See Algorithm~\ref{Sparsify} for further details.
\[
\begin{tabular}{|l|}
\hline
$\bar{G} = \textbf{Sparsify}(G,q) \cite{spielman2009graph}$ \\
\hline
Choose a random edge $e$ of $G$ with probability $p_e$ proportional to $w_e R_e$, \\
and add $e$ to $\bar{G}$ with weight $w_e/(q p_e)$. Take $q$ samples independently \\
with replacement, summing weights if an edge is chosen more than once. \\
\hline
\end{tabular}
\tag{A}
\label{Sparsify}
\]
\textcolor{black}{Computing bounds on the edge weights of the sparse graph approximation \( \bar{G} \), using effective resistances or local edge connectivities, is computationally intensive. To overcome this challenge, we propose Algorithm~\ref{algo_degree}, which leverages Theorems~\ref{weight_upper} and~\ref{connec_theorem} to efficiently compute upper bounds on edge weights and node degrees in the sparse graph. The following discussion introduces Bernstein's Inequality \cite{boucheron2013concentration}, which is utilized in the derivation of Theorem~\ref{weight_upper}.}
\begin{theorem}
\textbf{Bernstein's Inequality:} Suppose $X_1, \ldots, X_n$ are independent random variables with finite variances, and suppose that $\text{max}_{1 \leq i \leq n} \vert X_i \vert \leq B$ almost surely for some constant $B > 0$. Let $V = \sum_{i=1}^n \mathbb{E}X_i^2.$ Then, for every $t \geq 0,$ 
$$ P\{\sum_{i=1}^n (X_i - \mathbb{E}X_i) \geq t \} \leq exp\bigg(-\frac{t^2}{2(V + tB/3)}\bigg)$$ \\
and \\
$$ P\{\sum_{i=1}^n (X_i - \mathbb{E}X_i) \leq -t \} \leq exp\bigg(-\frac{t^2}{2(V + tB/3)}\bigg)$$
\label{Bernsteins}
\end{theorem}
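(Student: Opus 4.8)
The plan is to run the classical Chernoff (exponential-moment) argument, where the only real work is a sharp bound on the moment generating function of each summand. Fix $\lambda > 0$; by Markov's inequality applied to $\exp\bigl(\lambda \sum_i (X_i - \mathbb{E}X_i)\bigr)$ together with independence,
\[
P\Bigl\{\sum_{i=1}^n (X_i - \mathbb{E}X_i) \ge t\Bigr\} \le e^{-\lambda t} \prod_{i=1}^n \mathbb{E}\bigl[e^{\lambda(X_i - \mathbb{E}X_i)}\bigr],
\]
so it suffices to control each factor. The case $t = 0$ is trivial, and if $V = 0$ then each $X_i$ is a.s.\ constant and the left-hand side vanishes for $t > 0$, so I may assume $t, V > 0$.

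For the per-variable estimate I would avoid centering first: expand $e^{\lambda X_i} = 1 + \lambda X_i + \sum_{k \ge 2} \lambda^k X_i^k / k!$ and integrate term by term (justified by $|X_i| \le B$ and dominated convergence, since $\sum_k (\lambda B)^k/k! < \infty$). Using $X_i^k \le |X_i|^{k-2} X_i^2 \le B^{k-2} X_i^2$ for $k \ge 2$, this gives
\[
\mathbb{E}\bigl[e^{\lambda X_i}\bigr] \le 1 + \lambda \mathbb{E}X_i + \mathbb{E}X_i^2 \sum_{k \ge 2} \frac{\lambda^k B^{k-2}}{k!} \le \exp\Bigl(\lambda \mathbb{E}X_i + \mathbb{E}X_i^2 \, g(\lambda)\Bigr),
\]
where $g(\lambda) := B^{-2}\bigl(e^{\lambda B} - 1 - \lambda B\bigr)$ and the last step uses $1 + u \le e^u$. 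Multiplying through by $e^{-\lambda \mathbb{E}X_i}$ cancels the mean, so $\mathbb{E}[e^{\lambda(X_i - \mathbb{E}X_i)}] \le \exp(g(\lambda)\,\mathbb{E}X_i^2)$; taking the product over $i$ and recalling $V = \sum_i \mathbb{E}X_i^2$,
\[
P\Bigl\{\sum_{i=1}^n (X_i - \mathbb{E}X_i) \ge t\Bigr\} \le \exp\bigl(-\lambda t + V\, g(\lambda)\bigr).
\]

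To convert this (essentially Bennett's inequality) into the stated Bernstein form, I would invoke the elementary bound $k! \ge 2 \cdot 3^{k-2}$ for $k \ge 2$, which gives $e^{\lambda B} - 1 - \lambda B = \sum_{k \ge 2}(\lambda B)^k/k! \le \tfrac{1}{2} (\lambda B)^2 \sum_{k \ge 0} (\lambda B/3)^k = \tfrac{(\lambda B)^2/2}{1 - \lambda B/3}$ for $0 \le \lambda B < 3$, hence $g(\lambda) \le \tfrac{\lambda^2/2}{1 - \lambda B/3}$ on that range. Substituting and then taking $\lambda = \lambda^\star := t/(V + tB/3)$ --- which lies in the admissible range since $\lambda^\star B < tB/(tB/3) = 3$ --- one computes $1 - \lambda^\star B/3 = V/(V + tB/3)$, so the exponent collapses to $-\lambda^\star t + \lambda^\star t/2 = -t^2/\bigl(2(V + tB/3)\bigr)$, which is the first claimed inequality. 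The lower-tail bound follows by applying the first to $-X_1, \dots, -X_n$, which satisfy the same hypotheses with the same $B$ and $V$. The main obstacle --- really the only non-mechanical step --- is getting the constant $3$ right: it comes exactly from the factorial estimate $k! \ge 2\cdot 3^{k-2}$ and the geometric-series bound on the tail of the exponential, so that is the step I would write out most carefully.
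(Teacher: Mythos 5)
Your proof is correct and complete: the Chernoff bound, the moment-generating-function estimate $\mathbb{E}[e^{\lambda(X_i-\mathbb{E}X_i)}]\le\exp\bigl(g(\lambda)\,\mathbb{E}X_i^2\bigr)$ via $X_i^k\le B^{k-2}X_i^2$, the factorial bound $k!\ge 2\cdot 3^{k-2}$ yielding $g(\lambda)\le\tfrac{\lambda^2/2}{1-\lambda B/3}$, and the optimal choice $\lambda^\star=t/(V+tB/3)$ all check out, as does the reduction of the lower tail to the upper tail by negation. Note, however, that the paper does not prove this statement at all --- it is quoted as a known result from the cited concentration-inequalities reference and used as a black box in the proof of the edge-weight bound --- so there is no in-paper argument to compare against; your write-up is the standard textbook derivation of Bernstein's inequality (via Bennett's bound) and would serve as a self-contained proof if one were wanted.
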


\begin{theorem}
Let each edge \( (u, v) \) of a graph be sampled with probability \( p_{uv} \propto R_{uv} \). where $p_{uv} \geq \frac{\beta}{n \text{min(deg(u), deg(v))}}$, then $\mathbb{P}\{\frac{\sum_{i=1}^q X_i}{q} - w_{uv}  \geq t \} \leq \frac{1}{mn}$, $t \geq \frac{\epsilon_1 c_1}{3q} + \sqrt{\frac{2\epsilon_1 w_{uv} c_1}{q} + (\frac{\epsilon_1 c_1}{3q})^2}$, where 
$$X_i = \left\lbrace \begin{array}{ll}
\begin{array}{l}
\frac{w_{uv}}{p_{uv}},   \text{\;\;with probability $p_{uv}$ (u,v) $\in \; E(G)$} \\ \\
0, \text{\;\;otherwise.} \\ \\
\end{array} & 
\end{array}\right.$$
$\epsilon_1 = \log(nm), \vert X_i \vert \leq c_1, c_1 = \frac{w_{uv} n c_2}{\beta}$, where $c_2 = \text{min(deg($u$), deg($v$))}$.
\label{weight_upper}
\end{theorem}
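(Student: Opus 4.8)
The plan is to treat the $X_i$ as i.i.d.\ bounded random variables with mean $w_{uv}$ — independence coming from the fact that the $q$ samples in Algorithm~\ref{Sparsify} are drawn independently — feed them into Bernstein's Inequality (Theorem~\ref{Bernsteins}), and then convert the resulting exponential tail bound into the stated threshold on $t$ by solving a single quadratic inequality.

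First I would record the first two moments of $X_i$. Since $X_i = w_{uv}/p_{uv}$ with probability $p_{uv}$ and $X_i = 0$ otherwise, we get $\mathbb{E}X_i = w_{uv}$ and $\mathbb{E}X_i^2 = w_{uv}^2/p_{uv}$. The hypothesis $p_{uv} \ge \beta/(n c_2)$ with $c_2 = \min(\deg(u),\deg(v))$ yields $|X_i| \le w_{uv}/p_{uv} \le w_{uv} n c_2/\beta = c_1$ almost surely, so $B = c_1$ is a legitimate uniform bound for Bernstein; the same inequality gives $V := \sum_{i=1}^{q}\mathbb{E}X_i^2 = q\,w_{uv}^2/p_{uv} \le q\,w_{uv} c_1$.

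Next I would apply the first inequality of Theorem~\ref{Bernsteins} with $n \leftarrow q$, $t \leftarrow qt$, $B = c_1$, and the enlarged variance proxy $q w_{uv} c_1 \ge V$ (enlarging $V$ only weakens the bound, which is harmless since we only need an upper bound on the probability):
\[
\mathbb{P}\Big\{\tfrac{1}{q}\sum_{i=1}^{q} X_i - w_{uv} \ge t\Big\}
= \mathbb{P}\Big\{\sum_{i=1}^{q}(X_i - \mathbb{E}X_i) \ge qt\Big\}
\le \exp\!\Big(-\tfrac{q^2 t^2}{2\left(q w_{uv} c_1 + q t c_1/3\right)}\Big).
\]
It then remains to force the right-hand side below $1/(mn)$, i.e.\ $q^2 t^2 \ge 2\epsilon_1\left(q w_{uv} c_1 + q t c_1/3\right)$ with $\epsilon_1 = \log(nm)$. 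Dividing through by $q^2$ turns this into the quadratic inequality $t^2 - \tfrac{2\epsilon_1 c_1}{3q}\,t - \tfrac{2\epsilon_1 w_{uv} c_1}{q} \ge 0$, whose larger root is $\tfrac{\epsilon_1 c_1}{3q} + \sqrt{\tfrac{2\epsilon_1 w_{uv} c_1}{q} + \left(\tfrac{\epsilon_1 c_1}{3q}\right)^2}$; hence the inequality, and with it the claimed tail bound, holds for every $t$ at least this value, which is exactly the stated condition.

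The argument is essentially routine, and I expect no genuine obstacle; the only points that need care are (i) getting the direction of the substitution $V \le q w_{uv} c_1$ right, since replacing $V$ by an upper bound must enlarge the probability bound for the implication to remain valid, and (ii) the bookkeeping in the quadratic formula, in particular that the $4\cdot\tfrac{2\epsilon_1 w_{uv} c_1}{q}$ term produces precisely the expression written under the square root. I would also note explicitly that the relation $p_{uv}\propto R_{uv}$ plays no role in the concentration step itself — it only justifies the sampling scheme of Algorithm~\ref{Sparsify} — so the sole structural input used is the lower bound $p_{uv}\ge \beta/\bigl(n\,\min(\deg u,\deg v)\bigr)$, invoked twice above to bound $|X_i|$ and $V$.
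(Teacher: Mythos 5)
Your proposal is correct and follows essentially the same route as the paper's proof: identical moment computations ($\mathbb{E}X_i = w_{uv}$, $V \le q\,w_{uv}c_1$, $|X_i|\le c_1$ from the lower bound on $p_{uv}$), the same application of Bernstein's Inequality with $t \mapsto qt$, and the same quadratic/completing-the-square step yielding the stated threshold on $t$. Your added remarks on the direction of the variance-proxy substitution and on the irrelevance of $p_{uv}\propto R_{uv}$ to the concentration step are accurate but do not change the argument.
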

\begin{proof}
The expectation of \( X_i \) satisfies  
\[
\mathbb{E}[X_i] = w_{uv}.
\]
The second moment sum satisfies  
\[
\sum_{i=1}^q \mathbb{E} (X_i^2) = q \frac{w_{uv}^2}{p_{uv}} \leq \frac{w_{uv}^2 n c_2 q}{\beta},
\]
where \( c_2 = \min(\text{deg}(u), \text{deg}(v)) \).

    Applying Bernstein’s Inequality (Theorem~\ref{Bernsteins}), we obtain  
\[
\mathbb{P}\left\{\frac{\sum_{i=1}^q X_i}{q} - w_{uv}  \geq t \right\} \leq \exp\left(-\frac{(tq)^2}{2} \cdot \frac{1}{w_{uv} c_1 q + \frac{c_1 t q}{3}}\right).
\]

    $$
 \frac{(tq)^2}{2} \frac{1}{(w_{uv} c_1 q + \frac{c_1 t q}{3})} \geq \log{nm} \equiv \epsilon_1   
    $$
    Completing the square yields  
\[
\left( t - \frac{\epsilon_1 c_1}{3q} \right)^2 \geq \frac{2\epsilon_1 w_{uv} c_1}{q} + \left(\frac{\epsilon_1 c_1}{3q}\right)^2.
\]
Hence, it suffices to choose  
\[
t \geq \frac{\epsilon_1 c_1}{3q} + \sqrt{\frac{2\epsilon_1 w_{uv} c_1}{q} + \left(\frac{\epsilon_1 c_1}{3q}\right)^2}.
\]

\end{proof} 
\textcolor{black}{Theorem \ref{weight_upper} provides upper bounds on the edge weights of the sparse graph \( \bar{G} \) generated by Algorithm \ref{Sparsify}. Furthermore, it is necessary to establish bounds on the node degrees in the sparse graph \( \bar{G} \). To derive these bounds, we introduce Theorem~\ref{connec_theorem}, which utilizes the established relationship between the eigenvalues of the Laplacian matrix and the node degrees, as discussed in Section~\ref{knownbounds}.}\\
\textbf{Notation:}  
Let \( G \) be a simple weighted graph, i.e., a graph without self-loops or multiple edges with vertices $\{v_1, v_2, \ldots, v_n\}$. We define two subsets of vertices in \( G \) based on their degrees. The set \( l_m \) contains the \( n-m+1 \) vertices with the highest degrees, while \( s_m \) consists of the \( m \) vertices with the lowest degrees. The subgraphs \( G_{l_m} \) and \( G_{s_m} \) are defined as the subgraphs of \( G \) induced by the vertex sets \( l_m \) and \( s_m \), respectively.


\begin{equation}
\begin{aligned}
    a_{ub} &= \quad \text{upper bound on the maximal edge weight of the sparsified graph}, \\
    \Delta (G_{l_i})_{ub} &= \quad \text{upper bound on the maximum degree in } G_{l_i}, \\
    \delta (G_{s_i})_{lb} &= \quad \text{lower bound on the minimum degree in } G_{s_i}.
\end{aligned}
\end{equation}
Here $d_{v_1} = d_1 \leq d_{v_2} = d_2 \leq \ldots \leq d_{v_n} = d_n.$ 
\subsection{Some known bounds on eigen values of Laplacian matrix }
\label{knownbounds}
\underline{\textbf{Unweighted graphs}}: \cite{farber2011upper}(corollary 1) For simple unweighted graphs on $n$ vertices and $G \neq K_{n - m +1} + (m-1)K_1$ and $\bar{G} \neq K_{m-1} + (n-m+1)K_1$, we have the following relation
$$ d_m -n + m + 1 \leq \lambda_m (G) \leq d_{m-1}(G) + m - 2.$$
\underline{\textbf{Weighted graphs}}: \cite{farber2011upper}(corollary 2) Let $G$ be finite simple weighted graph on $n$ vertices and denote by a the maximal weight of an edge in $G$, then 
\begin{equation}
d_m(G) - \Delta(G_{l_m}) \leq \lambda_m(G) \leq d_{m-1}(G) + (m-1)a - \delta(G_{S_{m-1}}) \label{bounds}    
\end{equation}
\textcolor{black}{We present the following theorem, which establishes bounds on the degrees of the spectral sparsifier \( \bar{G} \) of \( G \) which are incorporated as constraints in the dynamic optimization step of our framework (Section~\ref{dynamiclinear}).}
\begin{theorem}
Let $\bar{G}$ be an $\epsilon$-spectral sparsifier of an undirected graph $G$ with $n$ vertices, where the eigenvalues of $G$ satisfy $\lambda_{1} \leq \lambda_{2} \leq \cdots \leq \lambda_{n}$, with $\lambda_n$ as the largest eigenvalue. The following bounds on the degrees of $\bar{G}$ must hold:
\newline

\begin{equation*}
d_{2}(G) - \Delta(G_{l_2})(1 - \epsilon) - a(G) \leq d_1(\bar{G}) \leq d_1(G)(1+\epsilon),
\end{equation*}

\begin{equation*}
(d_{i+1}(G) - \Delta(G_{l_{i+1}}))(1-\epsilon) - i a(\bar{G}) + \delta(\bar{G}_{s_i})_{lb} \leq d_i(\bar{G}) 
\leq (1+\epsilon) (d_{i-1}(G) + (i-1)a(G) - \delta(G_{s_{i-1}})) + \Delta(\bar{G}_{l_i})_{ub}, 
\end{equation*}
\hspace{75mm} \text{for } i = 2 \text{ to } n-1.

\begin{equation*}
d_n(G)(1 - \epsilon) \leq d_n(\bar{G}) \leq (1 + \epsilon)(d_{n-1}(G) + (n-1)a(G) - \delta(G_{s_{n-1}})).
\end{equation*}
\label{connec_theorem}
\end{theorem}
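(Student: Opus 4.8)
The plan is to derive each of the three displayed bounds directly from the weighted-graph eigenvalue inequality~\eqref{bounds} applied to the sparsifier $\bar G$, combined with the defining property of an $\epsilon$-spectral sparsifier, namely that $(1-\epsilon)\lambda_i(G) \le \lambda_i(\bar G) \le (1+\epsilon)\lambda_i(G)$ for all $i$, together with the elementary fact that for any graph $d_1 \le \lambda_n \le \text{(something)}$ — more precisely I will use the two-sided chain from~\eqref{bounds} in the forms
$$
d_m(\bar G) - \Delta(\bar G_{l_m}) \le \lambda_m(\bar G) \le d_{m-1}(\bar G) + (m-1)a(\bar G) - \delta(\bar G_{s_{m-1}}).
$$
The strategy for each case is the same: isolate $d_i(\bar G)$ on one side of one of these inequalities, then push the $\lambda$ term through the spectral-sparsifier sandwich, and finally replace the remaining $\lambda_i(G)$ by the corresponding bound from~\eqref{bounds} applied to $G$ itself. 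The auxiliary quantities $\Delta(\bar G_{l_i})_{ub}$, $\delta(\bar G_{s_i})_{lb}$, $a(\bar G)$ appear precisely because, when we bound $d_i(\bar G)$ from above we need an \emph{upper} bound for the $\Delta(\bar G_{l_i})$ that was subtracted, and when we bound from below we need a \emph{lower} bound for $\delta(\bar G_{s_{i-1}})$ that was subtracted; these are exactly the bookkeeping terms the theorem statement carries, and they are furnished externally (e.g.\ by Theorem~\ref{weight_upper} for edge weights, hence for $a(\bar G)$ and for degree sums).

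First I would treat the generic middle range $i = 2,\dots,n-1$, since the endpoint cases are degenerate specializations. For the upper bound on $d_i(\bar G)$: from $d_i(\bar G) - \Delta(\bar G_{l_i}) \le \lambda_i(\bar G) \le (1+\epsilon)\lambda_i(G)$ I get $d_i(\bar G) \le (1+\epsilon)\lambda_i(G) + \Delta(\bar G_{l_i})$; now substitute the upper bound $\lambda_i(G) \le d_{i-1}(G) + (i-1)a(G) - \delta(G_{s_{i-1}})$ from~\eqref{bounds} and replace $\Delta(\bar G_{l_i})$ by its upper estimate $\Delta(\bar G_{l_i})_{ub}$. For the lower bound on $d_i(\bar G)$: from $\lambda_i(\bar G) \le d_{i-1}(\bar G) + (i-1)a(\bar G) - \delta(\bar G_{s_{i-1}})$ I get $d_i(\bar G) \ge$ (re-indexing, use $\lambda_{i+1}(\bar G) \le d_i(\bar G) + i\,a(\bar G) - \delta(\bar G_{s_i})$) so $d_i(\bar G) \ge \lambda_{i+1}(\bar G) - i\,a(\bar G) + \delta(\bar G_{s_i}) \ge (1-\epsilon)\lambda_{i+1}(G) - i\,a(\bar G) + \delta(\bar G_{s_i})_{lb}$, and then substitute the lower bound $\lambda_{i+1}(G) \ge d_{i+1}(G) - \Delta(G_{l_{i+1}})$. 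This reproduces the stated inequality. The endpoint $i=1$ follows the same recipe but using $\lambda_1(\bar G) = 0$ is useless, so instead I would use $d_1(\bar G) \le \lambda_n(\bar G)$-type reasoning is also wrong; rather, for $i=1$ one uses $d_1(\bar G) \le \lambda_1(\bar G) + \Delta(\bar G_{l_1})$ is vacuous too — the cleaner route is: $d_1(\bar G) \le d_1(G)(1+\epsilon)$ should come from a direct degree-sparsification estimate (each edge weight is controlled, Theorem~\ref{weight_upper}), not from eigenvalues; and the lower bound $d_1(\bar G) \ge \lambda_2(\bar G) - a(\bar G) - \ldots$ via~\eqref{bounds} with $m=2$: $\lambda_2(\bar G) \le d_1(\bar G) + a(\bar G) - \delta(\bar G_{s_1})$, hence $d_1(\bar G) \ge \lambda_2(\bar G) - a(\bar G) + \delta(\bar G_{s_1}) \ge (1-\epsilon)\lambda_2(G) - a(\bar G) \ge (1-\epsilon)(d_2(G) - \Delta(G_{l_2})) - a(G)$, absorbing $a(\bar G)\le a(G)$ or the sparsifier weight bound. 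The case $i=n$: lower bound $d_n(\bar G) \ge \lambda_n(\bar G) - \Delta(\bar G_{l_n})$; but $\Delta(\bar G_{l_n})$ with $l_n$ a single vertex is $0$ (no edges in a one-vertex induced subgraph), so $d_n(\bar G) \ge \lambda_n(\bar G) \ge (1-\epsilon)\lambda_n(G) \ge (1-\epsilon)d_n(G)$ using $\lambda_n(G)\ge d_n(G)$; the upper bound is the generic computation with $i=n$.

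The main obstacle I anticipate is not any single inequality — each is a one-line manipulation — but rather getting the index bookkeeping and the direction of every auxiliary bound exactly right, in particular making sure that in the lower bounds we subtract an \emph{upper} estimate of the positive terms being subtracted (so we need $a(\bar G)$, $i\,a(\bar G)$, $\Delta(G_{l_{i+1}})$ controlled from above) while adding back a \emph{lower} estimate $\delta(\bar G_{s_i})_{lb}$, and symmetrically for the upper bounds; and, relatedly, justifying the mixed appearance of $a(G)$ versus $a(\bar G)$ and of $G$-subgraph quantities versus $\bar G$-subgraph quantities. A secondary subtlety is the boundary behavior of $\Delta(\bar G_{l_n})$ and $\delta(\bar G_{s_1})$ for single-vertex induced subgraphs, which must be handled as the $=0$ cases so that the endpoint formulas come out as stated. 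Once the sparsifier sandwich and~\eqref{bounds} are both in hand, the rest is algebra, so I would present the middle-range case in full and remark that $i=1$ and $i=n$ are the analogous specializations.
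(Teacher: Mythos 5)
Your proposal follows essentially the same route as the paper's proof: sandwich $\lambda_i(\bar G)$ between $(1\pm\epsilon)\lambda_i(G)$, apply the weighted-graph eigenvalue--degree bounds of Equation~\ref{bounds} to both $G$ and $\bar G$, and isolate $d_i(\bar G)$, with the auxiliary quantities $\Delta(\bar G_{l_i})_{ub}$, $\delta(\bar G_{s_i})_{lb}$ playing exactly the bookkeeping role you describe. The only cosmetic difference is at the endpoints, where the paper invokes cut preservation for $d_n(\bar G)\ge(1-\epsilon)d_n(G)$ and $d_1(\bar G)\le(1+\epsilon)d_1(G)$ while you argue via $\lambda_n(G)\ge d_n(G)$ and direct edge-weight control; both are sound.
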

\begin{proof}
Since $\bar{G}$ is an $\epsilon$-spectral approximation of $G$, the eigenvalues satisfy:
\begin{equation}
(1-\epsilon)\lambda_i(G) \leq \lambda_i(\bar{G}) \leq (1+\epsilon)\lambda_i(G), \tag{a}
\end{equation}
for $i = 1, \dots, n$.

From the spectral bounds on eigenvalues in Equation~\ref{bounds}, we know:
\begin{equation}
\lambda_i(G) \leq d_{i-1}(G) + (i-1)a(G) - \delta(G_{s_{i-1}}), \quad 
\lambda_i(G) \geq d_i(G) - \Delta(G_{l_i}). \tag{b}
\end{equation}

Applying (a) to the upper bound in (b), we obtain:
\begin{equation}
d_i(\bar{G}) \leq (1+\epsilon)\big(d_{i-1}(G) + (i-1)a(G) - \delta(G_{s_{i-1}})\big) + \Delta(\bar{G}_{l_i})_{ub} . \tag{c}
\end{equation}

Similarly, using the lower bound in (b), we have:
\begin{equation}
d_{i-1}(\bar{G}) \geq \big(d_i(G) - \Delta(G_{l_i})\big)(1-\epsilon) - (i-1)a(\bar{G}) + \delta(\bar{G}_{s_{i-1}})_{lb}. \tag{d}
\end{equation}

For the largest degree, we use the fact that $\bar{G}$ preserves cuts in $G$, giving:
\begin{equation}
d_n(G)(1-\epsilon) \leq d_n(\bar{G}) \leq (1+\epsilon)(d_{n-1}(G) + (n-1)a(G) - \delta(G_{s_{n-1}})). \tag{e}
\end{equation}

Hence, the degree bounds for all nodes in $\bar{G}$ are established.
\end{proof}
Let $\delta_i^u$ and $\delta_i^l$ denote the upper and lower bounds on the degree of node $i$, respectively, as established in Theorem \ref{connec_theorem}. Additionally, let $\delta^{-}(i)$ represent the refined lower bound of degree $i$, and $\Delta^{+}(i)$ the refined upper bound:
\[
d_i(\bar{G}) \geq \max\left(\delta_i^l, (1-\epsilon)\,d_i(G)\right) = \delta^{-}(i),
\]

\[
d_i(\bar{G}) \leq \min\left(\delta_i^u, (1+\epsilon)\,d_i(G)\right) = \Delta^{+}(i).
\]

Algorithm \ref{algo_degree} outlines the computation of the upper bound $\Delta^+$ and the lower bound $\delta^-$ on the node degrees in the sparsified graph $\bar{G}$.  

Here, $\Delta^+(i)$ denotes the upper bound on the maximum degree in the subgraph of $\bar{G}$ induced by the vertices $\{v_i, v_{i+1}, \dots, v_n\}$, while $\delta^-(i)$ represents the lower bound on the minimum degree in the subgraph induced by $\{v_1, v_2, \dots, v_{n-i+1}\}$. Furthermore, we define upper and lower bounds on graph $G$ as $\Delta(i) = \Delta(G_{l_i})$ and $\delta(i) = \delta(G_{s_i})$.

\begin{algorithm}[H]

---------------------------------------------------------------------------------------------------------------------\\
\textbf{Algorithm 1 \\}
---------------------------------------------------------------------------------------------------------------------\\
\SetKwInOut{Input}{Input}
\SetKwInOut{Output}{Output}

\Input{Graph $G = (V,E, W)$, $\epsilon$}
\Output{$\Delta^+, \delta^-, \Delta, \delta, w^+, w^-$}
$w^+ = \{ \}; \;\; w^- = \{\}$ \\
\For{each edge \( e \in E(G) \)}{
    Compute \( t \) using Theorem \ref{weight_upper}. \\
    Append \( w_e + t \) to \( w^+ \). \\
    Append \( \max(w_e - t, 0) \) to \( w^- \).
}
$\text{node} \leftarrow \{1,2,\ldots, n\}$ \\
degree $\leftarrow\; \{d_1, d_2, \ldots, d_n\}$ dfs($G$)   // O($V+E$) \\
$\text{degree}, \text{degree1}, \text{degree}^+, \text{degree}^- \leftarrow \text{dictionary($\{v_1, v_2, \ldots, v_n\}$, $\{d_{v_1}, d_{v_2},\ldots, d_{v_n}\}$) (where $d_{v_1} \leq d_{v_2} \leq \ldots \leq d_{v_n}$)}$

$\Delta(1) = \text{max(degree)}, \delta(n) = \text{min(degree)}, a(G) (\text{maximal weight})$ \\
$\Delta(n) = 0, \;\;\Delta^+(n) = 0$\\
$\delta(1) = 0, \;\;\delta^-(1) = 0$

$\Delta^+(1) = \text{max(degree)}, \delta^-(n) = \text{min(degree)}$ \\
$l_2$ $\leftarrow$ node; $l_3$ $\leftarrow$ node \\
\text{Remove node $v_{1}$ from $l_2$} \newline 
\text{Remove node $v_{n}$ from $l_3$} \newline 
\For{$i = 2$ to $n-1$}{
    
    \For{$y = \{v_{i+1}, v_{i+2}, \ldots, v_n\}$}
    {
    if $v_i, y \in E$ \\ 
        \hspace{7mm}degree($y$) = degree($y$) - $w(v_i, y)$, degree($v_i$) = degree($v_i$) - $w(v_i, y)$ \\ \hspace{7mm}$\text{degree}^+(y) = \text{degree}^+(y) - w^+(v_i, y)$, $\text{degree}^+(v_i) = \text{degree}^+(v_i) - w^+(v_i, y)$
    
    } 
   
    \text{Remove node $v_{i}$ from $l_2$} \newline 
    \For{$y = \{v_{1}, v_{2}, \ldots, v_{n-i}\}$}
    {
    if $v_{n-i+1}, y \in E$ \\ 
        \hspace{7mm}degree1($y$) = degree1($y$) - $w(v_i, y)$, degree1($v_i$) = degree1($v_i$) - $w(v_i, y)$ \\ \hspace{7mm}$\text{degree}^-(y) = \text{degree}^-(y) - w^-(v_i, y)$, $\text{degree}^-(v_i) = \text{degree}^-(v_i) - w^-(v_i, y)$
    
    }
    
    $\delta(n-i+1)$ = \text{$\min_{x \in l_3} \text{degree1}$} \\
    $\delta^-(n-i+1)$ =  $\min_{x \in l_3} \text{degree}^-$ 
    \\
    \text{Remove node $v_{n-i+1}$ from $l_3$} \\
}
\label{algo_degree}
\end{algorithm}
\textcolor{black}{
The outputs of Algorithm~\ref{algo_degree} (\( \Delta^+, \delta^-, w^+, w^- \)) are incorporated as constraints in the dynamic optimization process (Constraints \ref{D3}, \ref{D4}, \ref{D5}, \ref{D6}). These constraints are specifically applied in the dynamic optimization of linear dynamical systems on graphs, as discussed in Section~\ref{dynamiclinear} and outlined in Section~\ref{framework}.
}




\subsection{\textcolor{black}{Dynamic Optimization for Linear Dynamical Systems on Graphs Using a POD-Based Surrogate Model}} 
    \label{dynamiclinear}
    Real-time dynamic optimization problems are described in \cite{book}, \cite{bartusiak}, and \cite{Nagy2007}). Using snapshots of solutions from several trajectories (Section \ref{spatio-temporal}), we formulate the dynamic optimization problem in a ROM space \cite{rathinam_petzold_2003}, \cite{ajayakumar2023data} using the matrix $\rho \in \mathbb{R}^{k \times \vert V\vert }$ of projection and mean $\bar{x}$, where $k$ denotes the reduced dimension and $k < n$ (Section \ref{POD-steps}). The cost function $J$ (Eq. \ref{D1}) in the dynamic optimization makes use of data points obtained from ROM. The dynamic optimization framework for obtaining a sparse graph in the case of diffusion ($\mathbb{D}$) considering a single trajectory is given below: 
    \begin{equation}
    \label{D1}
\underset{\bar{\gamma}}{\textbf{minimize}} \,\, J(\bar{z},\bar{\gamma}) = 
\underbrace{\frac{\sum_{i=1}^{N}
      \sum_{j = 1}^k (z(i,j)(i \;\text{step } m_1)  -\bar{z}(i,j)(i \;\text{step } m_1) \;)^2 }{2}}_{A1}  
      +  \underbrace{{\alpha} \sum_{i=1}^m \vert \bar{\gamma}_i \vert}_{A2}
\end{equation}

\begin{alignat*}{2}
\textbf{subject\,\, to} \\
& \tag{D2} \label{D2} t_n \;N f_a(\bar{z}(i,1), \bar{z}(i,2), \ldots, \bar{z}(i,k), \bar{\gamma}) = \left\lbrack \begin{array}{c}
\bar{z}(i, 1)\\
\bar{z}(i, 2)\\
\ldotp \\
\ldotp \\
\ldotp \\
\bar{z}(i, k)
\end{array}\right\rbrack  - \left\lbrack \begin{array}{c}
\bar{z}(i-1, 1)\\
\bar{z}(i-1, 2)\\
\ldotp \\
\ldotp \\
\ldotp \\
\bar{z}(i-1, k)\\
\end{array}\right\rbrack && \;\;i = 1,2,\ldots C \\ &\{\bar{z}(0,1),\bar{z}(0,2),\ldots, \bar{z}(0,k)\} = \rho (F_0 - \bar{F}) \\
& \tag{D3} \label{D3}Q^{'} \bar{\gamma} \geq \delta^{-} \quad &&\\
&\tag{D4} \label{D4}Q^{'} \bar{\gamma} \leq \Delta^{+} \quad &&\\
& \tag{D5} \label{D5}w_j \bar{\gamma}_j \geq w^{-}_j, &&j = 1,2,\dots,m \\
& \tag{D6} \label{D6}w_j \bar{\gamma}_j \leq w^{+}_j, &&j = 1,2,\dots,m \\
& \tag{D7} \label{D7}\bar{\gamma}_j \geq 0, &&j = 1,2,\dots,m \\
\end{alignat*}
    
\setlist[enumerate]{resume}
The term $A1$ in the objective function (Eq. \ref{D1}) aims to minimize the discrepancy between the states obtained using the projected vector field with multipliers $\bar{\gamma}$ and the known ROM solution obtained from the original graph. The term $A2$ enforces sparsity in $\bar{\gamma}$ by penalizing nonzero elements. The optimization considers states at discrete increments of step size $m_1$. The matrix $N$ is determined by the chosen collocation method (see \cite{HEDENGREN2014133}).

$t_n$ denotes the step size. $C$ denotes the number of collocation elements. $z(i,k)$ represents the $k-$th entry of state $z$ in the $i$-th collocation element. $F_0 = \{F(0,1), F(0,2), \ldots, F({0, \vert V \vert})\}$ represents the initial condition used for generating data using ROM $z(i,j)$. $\bar{F}$ is determined based on the procedure described as in \cite{rathinam_petzold_2003}. The first set of $kC$ constraints stems from the orthogonal collocation method on finite elements (Eq. \ref{D2}). Non-negativity of the multipliers $\bar{\gamma}_j$ is imposed in constraints (Eq. \ref{D7}). The constraints $Q'\bar{\gamma} \geq \delta^-$ (Eq. \ref{D3}) and $Q'\bar{\gamma} \leq \Delta^+$ (Eq. \ref{D4}) enforce connectivity levels as derived in Theorem \ref{connec_theorem}. Here, $Q^{'} = Q \;\mathrm{diag(}w)$, where $Q$ represents the unsigned incidence matrix of the graph and $w$ represents the weights of the given graph. The weight constraints $w_j \gamma_j \geq w^{-}_j$ and $w_j \gamma_j \leq w^{+}_j$ are based on Theorem \ref{weight_upper}. From the output of the dynamic optimization framework $(z^*, \bar{\gamma}^*)$, 
we prune out weights $w^*_j \leq \epsilon_1$, where $\epsilon_1$ is a user-defined parameter ($w^{*}_j = w_j \gamma^{*}_j$). The new weight vector is denoted as $w_1^*$. The sparse graph Laplacian $L_1 = B^T \text{diag}(w_1^*) B$ replaces the standard Laplacian $L$ in the filtering step to improve the efficiency of Laplacian-vector product computations. 
\par Several key observations can be made about the dynamic optimization problem. The objective function ($J$) can be made continuous by using the substitution $\gamma = \gamma^+ - \gamma^-$, $\vert\vert \gamma \vert\vert_1 = \mathbf{1}^T \gamma^+  + \mathbf{1}^T \gamma^-$. $\gamma^+$ represents the positive entries in $\gamma$ (0 elsewhere) and $\gamma^-$ represents the negative entries in $\gamma$ (0 elsewhere). The objective function is convex, and the constraints are continuous. To solve this optimization problem, we can employ the Barrier method for constrained optimization, as discussed in \cite{nocedalbook} and \cite{luenberger2021}.

\subsection{\textcolor{black}{
Dynamic Optimization for Reaction-diffusion Dynamical Systems on Graphs Using a POD-Based Surrogate Model}}
\label{dynamoptimrd}
In this section, we present the formulation of the dynamic optimization framework for a reaction-diffusion system. A general reaction-diffusion system is described in \cite{cencetti_clusella_fanelli_2018}, where the activity at node $j$ at time $t$ is represented by an $\mathrm{m}$-dimensional variable $r_j(t)$. The temporal evolution of $r_j$ follows the differential equation:
$$ \frac{dr_j}{dt} = \mathcal{F}(r_j) + K \sum_{k=1}^{N} A_{jk} \mathcal{G}(r_k-r_j) \hspace{5mm} j = 1,2,....N.$$
$\mathcal{F}$ denotes the reaction component, while the remaining terms represent the diffusion process. $\mathcal{F} : \mathbb{R}^{\mathrm{m}} \rightarrow \mathbb{R}^{\mathrm{m}}$, $\mathcal{G}: \mathbb{R}^{\mathrm{m}} \rightarrow \mathbb{R}^{\mathrm{m}}$. For brevity, we use the alternating self-dynamics Brusselator model ($\mathbb{C}$) discussed in Section \ref{Background}.
When the input graph is connected, We impose a minimum connectivity constraint to limit perturbations in the second smallest eigenvalue of the graph Laplacian matrix. The intuition behind this constraint is that for a connected graph, the lowest degree will be greater than zero, making this constraint necessary. $\tau_L$ is the minimum degree imposed on the sparse graph (Eq. \ref{D10}). Non-negativity of weights is also imposed on the dynamic optimization problem (Eq. \ref{D11}). While \cite{ajayakumar2023data} discusses generating sparse graphs using snapshots of data at arbitrary time points, we utilize data points at collocation time points for sparsification. We consider the following dynamic optimization problem for a reaction-diffusion system considering a single trajectory:

\[
\tag{D8} 
\underset{\bar{\gamma}}{\textbf{minimize}} \,\, J(\bar{z},\bar{\gamma}) = {\frac{\sum_{i=1}^{N}
      \sum_{j = 1}^k (z(i,j)(i \;\text{step $m_1$})  -\bar{z}(i,j)(i \;\text{step $m_1$}) \;)^2 }{2}}  +  {{\alpha} \sum_{i=1}^m \vert \bar{\gamma}_i \vert}
      \label{costfunc}
\]
\label{costfunc2}
\vspace{-6mm}
\hspace{-20mm}
\begin{alignat*}{2}
\textbf{subject\,\, to} \\
& \tag{D9} \label{D9} t_n \;N f_a(\bar{z}(i,1), \bar{z}(i,2), \ldots, \bar{z}(i,k), \bar{\gamma}) = \left\lbrack \begin{array}{c}
\bar{z}(i, 1)\\
\bar{z}(i, 2)\\
\ldotp \\
\ldotp \\
\ldotp \\
\bar{z}(i, k)
\end{array}\right\rbrack  - \left\lbrack \begin{array}{c}
\bar{z}(i-1, 1)\\
\bar{z}(i-1, 2)\\
\ldotp \\
\ldotp \\
\ldotp \\
\bar{z}(i-1, k)\\
\end{array}\right\rbrack && \;\;i = 1,2,\ldots C \\ \\ &\{\bar{z}(0,1),\bar{z}(0,2),\ldots, \bar{z}(0,k)\} = \rho (X_0 - \bar{X}) \\
&\tag{D10} \label{D10} Q^{'} \bar{\gamma} \geq \tau_{L} \quad &&\\
& \tag{D11} \label{D11}\bar{\gamma}_j \geq 0, &&j = 1,2,\dots,m \\
\end{alignat*}
The initial condition $X_0 = \{x(0,1), x(0,2), \ldots, x(0,{\vert V \vert}), y(0,1), y(0,2), \ldots, y(0,{\vert V \vert})\}$ is used to obtain data points ($z(i,j)$) using ROM. $\bar{X}$ is determined as described in \cite{rathinam_petzold_2003}. The output from the dynamic optimization problem is $(\bar{z}^*, \bar{\gamma}^*)$. Then, $w_1 = \mathrm{diag(}W)\gamma^*$. Elements in $w_1$ less than $\epsilon_1$ are set to 0, and $w_1$ is updated. We update the solutions in the filtering step using this sparsified graph $G_1 = (V, E_1, w_1)$ when the uncertainty value in the covariance matrix of filtering exceeds a threshold.

\section{Experimental Results}
\textcolor{black}{
In this section, we present empirical results demonstrating the effectiveness of the proposed framework in reinforcing graph-based linear dynamical systems (\(\mathbb{D}\)) and nonlinear reaction-diffusion systems, specifically the chemical Brusselator model (\(\mathbb{C}\)), as described in Section \ref{spatio-temporal}. We evaluate the effectiveness of our framework by analyzing RMSE values under perturbations in initial conditions for surrogate models applied to both linear and reaction-diffusion systems on graphs. Additionally, we assess the impact of incorporating our framework on these models for both linear and non-linear dynamical systems. The influence of initial condition perturbations on neural ODE-based surrogate models, both with and without our framework discussed in Section~\ref{neuralodesl}, is also presented in Table~\ref{tab:rmse_comparison}. 
}\\
\begin{table}[htbp]
    \centering
    \renewcommand{\arraystretch}{1.2} 
    \setlength{\tabcolsep}{12pt} 
    \resizebox{\linewidth}{!}{ 
    \begin{tabular}{|l|c|c|c|}
        \hline
        \textbf{Experiment} & \textbf{Surrogate model} & \cellcolor{gray!20} \textcolor{black}{Surrogate model with Framework} & \textbf{ROM} \\
        & RMSE & \cellcolor{gray!20} RMSE & RMSE \\
        \hline
        Reaction-Diffusion with ROM surrogate model & 0.59 & \cellcolor{gray!20} \textbf{0.40} & — \\
        \hline
        Linear Diffusion with ROM surrogate model   & 5.17 & \cellcolor{gray!20} \textbf{0.38} & — \\
        \hline
        Linear Diffusion with Neural ODE surrogate model & 0.48 & \cellcolor{gray!20} \textbf{0.29} & 0.65 \\
        \hline
    \end{tabular}
    }
  \caption{Comparison of RMSE values across different experiments, analyzing the effects of perturbations of input on surrogate models. 
The \textbf{Reaction-Diffusion experiment} is conducted on a \textbf{40-node Erd\H{o}s-R$\acute{e}$nyi random graph} using the \textbf{chemical Brusselator dynamics} ($\mathbb{C}$), as detailed in Section~\ref{spatio-temporal}. 
The \textbf{Linear Diffusion experiment} is performed on a \textbf{30-node Erd\H{o}s-R$\acute{e}$nyi random graph} with the \textbf{graph-based dynamical system} ($\mathbb{D}$), employing the \textbf{ROM} (Section~\ref{POD-steps}) as the surrogate model. 
For both the \textbf{Linear Diffusion} and \textbf{Reaction-Diffusion} experiments, we assess the robustness of the surrogate model with \textbf{our framework} under perturbations. 
In the \textbf{Linear Diffusion experiment with neural ODEs} (Section~\ref{neuralodesl}), we analyze the impact of perturbations on a \textbf{neural-ODE surrogate model} for the diffusion equation ($\mathbb{D}$ in Section~\ref{spatio-temporal}) on a \textbf{10-node complete graph}. 
Surrogate model with our framework achieves the \textbf{lowest RMSE} across all experiments, demonstrating its effectiveness in improving surrogate model accuracy under perturbations.}

    \label{tab:rmse_comparison}
\end{table}

\textbf{Remark:} \textcolor{black}{In certain graph structures, empirical experiments revealed instances of particle filter divergence.} Particle filter divergence is a critical issue that must be carefully addressed, as it compromises estimation accuracy and reduces the framework's effectiveness. Several factors can contribute to this phenomenon, including suboptimal filter tuning, modeling inaccuracies, inconsistent measurement data, or system-related issues such as hardware-induced delays. Specifically, inaccurate likelihood estimations due to imprecise noise assumptions, erroneous process models, or delayed measurement updates can lead to divergence. For further examples, see \cite{elfring2021particle}. Empirically, we observed that in certain graph configurations, the particle filter exhibited divergence, necessitating additional updates in the filtering step.
\label{numerical_results}
\subsection{Linear dynamical system represented on graphs}
We present the experimental results for the linear diffusion equation (\(\mathbb{D}\)) on a 30-node Erd\H{o}s-R$\acute{e}$nyi random graph using the 4-point orthogonal collocation method with 20 elements. The parameters used in Algorithm \ref{algo_degree} and Section \ref{dynamiclinear} are as follows: \(\epsilon = 0.5\), \(T=0.15\), \(k = \mathrm{min}(\lceil \frac{n}{5} \rceil, 50)\), with two trajectories considered. The number of clusters is set to \(p=30\), while the particle filter employs 15,000 particles with \(\epsilon_1 = 1 \times 10^{-5}\). The noise terms \(w_k\) and \(\mu_k\) are assumed to be normally distributed with zero mean and variances \(\zeta_x I\) and \(\zeta_y I\), where \(\zeta_x = 0.01\) and \(\zeta_y = 1 \times 10^{-7}\).

Following the dynamic optimization step, the resulting graph exhibited 31 sparse edges, a substantial reduction from the original graph's 336 edges. During the filtering step, 193 updates were required for prediction over 1000 timesteps.  \textcolor{black}{Figure \ref{linearplot} compares the reduced order model solution, the actual solution, and the reduced-order model solution with our framework for the linear dynamical system ($\mathbb{D}$) described in Section \ref{spatio-temporal}. Figure~\hyperref[linearplot]{3(c)}
 shows that the ROM solution with our framework closely resembles the actual solution in Figure~\hyperref[linearplot]{3(b)} than the reduced-order model solution in Figure~\hyperref[linearplot]{3(a)}, as discussed in Section \ref{POD-steps}.} 
\begin{figure}[htbp]
  \centering
  \includegraphics[width=\linewidth]{./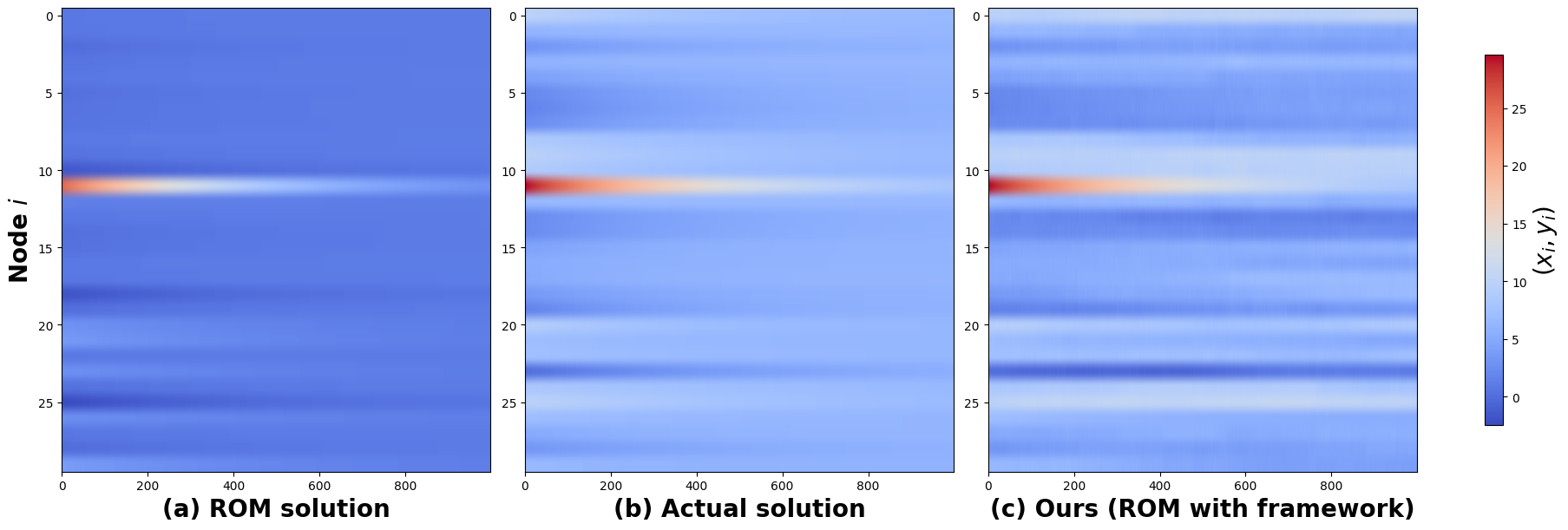}
  \caption{Comparison of the (a) ROM Solution, (b) Actual Solution, and (c) ROM solution with our framework for the graph-based dynamical system $\mathbb{D}$, as described in Section \ref{spatio-temporal}.}
  \label{linearplot}
\end{figure}

\subsection{Reaction-diffusion system represented on graphs}
\begin{figure}[H]
  \centering
  \includegraphics[width=\textwidth]{./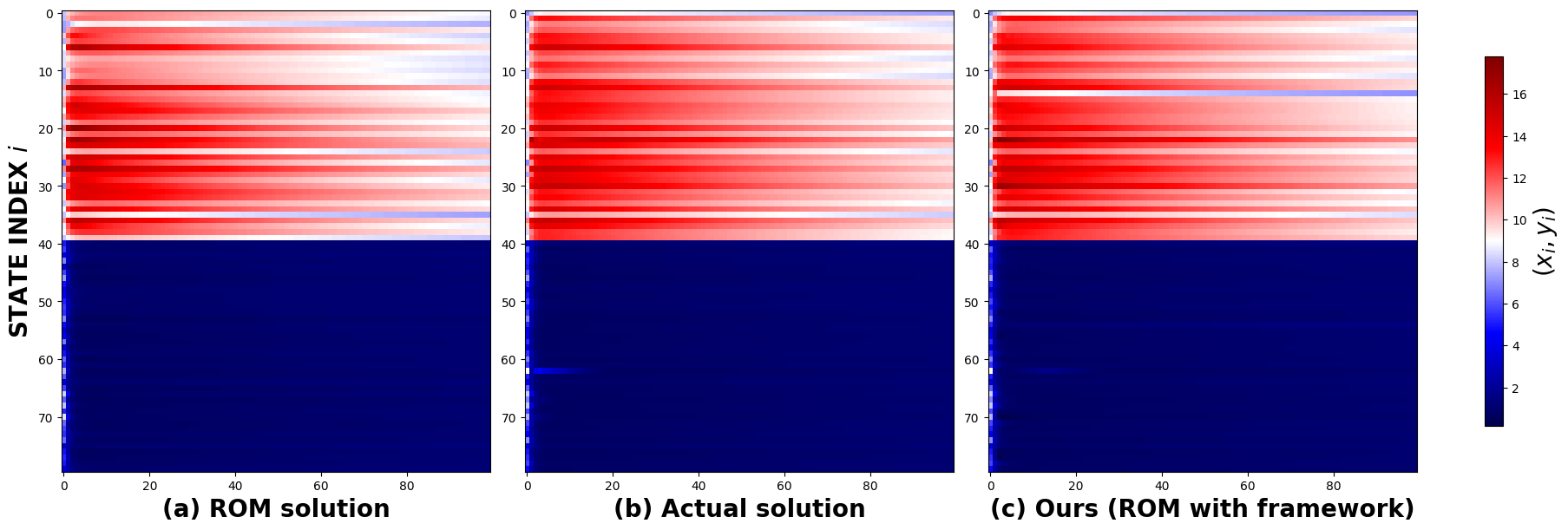}
  \caption{Comparing (a) ROM, (b) Actual, and (c) ROM solution with our framework for the case of the chemical Brusselator reaction-diffusion system (Eq. \ref{Rd dynamics}).}
  \label{rd_fig}
\end{figure}


\par The results for the nonlinear case are based on the alternating self-dynamics Brusselator model applied to a 40-node Erd\H{o}s-R$\acute{e}$nyi random graph, using the 4-point orthogonal collocation method. The number of trajectories is set to two, and the particle filter employs 15,000 particles. The total simulation time is set to \( T = 3 \), with the number of clusters fixed at \( p = 30 \). The sparsification parameter $\epsilon_1$ is set to  $10^{-5}$, and the minimum connectivity constraint is defined as \( \tau_L = 0.1 \mathrm{d}_s \), where \( \mathrm{d}_s \) denotes the minimum degree of the graph. The reduced-order dimension is given by \( k = \mathrm{min}(\lceil \frac{2n}{5} \rceil, 50) \). The noise terms \( w_k \) and \( \mu_k \) are assumed to be normally distributed with zero mean and variances \( \zeta_x I \) and \( \zeta_y I \), where \( \zeta_x = 0.01 \) and \( \zeta_y = 1 \times 10^{-7} \).
The graph obtained from the dynamic optimization step contained 286 sparse edges, a reduction from the original graph's 336 edges. The filtering step required a total of 50 updates for predictions over 100 timesteps. 

\textcolor{black}{Figure \ref{rd_fig} presents a comparison of the actual solution of the chemical Brusselator reaction-diffusion system (\(\mathbb{C}\) in Section \ref{spatio-temporal}) for a randomly initialized condition, with both the reduced-order model (Section \ref{POD-steps}) and the ROM solution with our framework. The grid-based representation highlights regions marked in black, indicating areas where the absolute error of the particle filter solution is lower than that of the ROM solution.}

\section{Benchmarking the framework using neural ODEs}
\label{benchmark}
\textcolor{black}{To broaden the applicability of our framework, we evaluate its effectiveness using a neural-ODE-based surrogate model, where state vectors are observed at discrete time intervals.}
Neural ordinary differential equations (neural ODEs) provide a powerful framework for modeling and analyzing complex dynamical systems~\cite{DBLP:journals/corr/abs-1005-0265, yan2022robustness}. They offer a flexible approach to capturing system dynamics by integrating the principles of ordinary differential equations with neural networks. Unlike traditional neural networks that process inputs through discrete layers, neural ODEs represent system dynamics continuously using an ordinary differential equation to govern the evolution of hidden states over time. Neural ODEs learn system dynamics by learning the parameters of the differential equation using the adjoint sensitivity method (\cite{lewis_lakshmivarahan_dhall_2009}). This approach allows the model to capture complex temporal dependencies in data. By leveraging the continuous-time nature of differential equations, neural ODEs provide key advantages, including the ability to model irregularly sampled time-series data and accommodate variable-length inputs.

 \par To illustrate the framework, we consider a linear dynamical system ($\mathbb{D}$) from Section~\ref{spatio-temporal}, defined as follows:
\begin{equation}
 \frac{dx}{dt} = -Lx.
 \label{nodesystem}
\end{equation}

\textcolor{black}{The key distinction in applying our framework with the neural ODE model as the surrogate is the absence of the dynamic optimization step outlined in Section \ref{framework}. Instead, an additional step is included to train the parameters of the neural ODE surrogate model, as described in Equation \ref{costfuncnode}. The neural ODE solutions are treated as observations in the filtering step after projection ($z_k$), where the POD method is applied for dimensionality reduction. The forward state dynamics and the state observation relationship, as applied in Step 4 of the filtering process within the framework, are given by:}
\begin{equation}
\begin{cases}
x_{k+1} = M({x}_k) + A_{k+1} v_{k+1} + w_k, \\
z_k =\rho \left(x_k -\bar{x} \right)+ R_k \beta_k + \mu_k.
\end{cases}
\end{equation}
\\
When utilizing the neural ODE surrogate model with an Euler discretization step \(\Delta t_k\), the forward model is expressed as:
\begin{equation*}
M({x}_k) = x_k + \Delta t_k \cdot nn(x_k).
\end{equation*}
\\
At iteration $k$, the vector $\beta_k$ is updated by solving the following linear system:
\begin{equation*}
R_{k+1} \beta_k = z_{k+1} - \rho(x_{k+1} - \bar{x}).
\end{equation*}
\\
The matrices $A_{k+1}$ and $R_{k+1}$ are computed as described in Step 3 of the framework (Section~\ref{framework}). The solution at time step \( k+1 \) is determined as:
\begin{equation*}
x_{k+1} = x_k + \Delta t_k  nn(x_k),
\end{equation*}
where \( \Delta t_k = t_{k+1} - t_k \). The vector $v_k$ is initialized as $\mathbf{0}_{n \times 1}$. The neural network architecture is structured as follows:
\begin{equation*}
nn(x(t)) = \sinh{ (\theta_3 \theta_2 \theta_1 x(t) + \theta_3 \theta_2 b_1 + b_3)}.    
\label{neuralodesl}
\end{equation*}

\begin{align*}
\theta_1 &\in \mathbb{R}^{10 \times 50}, \theta_2 \in \mathbb{R}^{50 \times 50}, \theta_3 \in \mathbb{R}^{50 \times 10}, b_1 \in \mathbb{R}^{50}, b_3 \in \mathbb{R}^{10}.
\end{align*}

The experiment involved 41 updates, with predictions performed over 100 timesteps. The number of particles in the particle filter step was set to 20,000. The noise terms \( w_k \) and \( \mu_k \) were assumed to follow normal distributions with zero mean and variances \( \zeta_x I \) and \( \zeta_y I \), where \( \zeta_x = 0.01 \) and \( \zeta_y = 1 \times 10^{-3} \). The Laplacian matrix \( L \) was chosen as the Laplacian of the complete graph \( \mathbb{K}_{10} \), and \( N = 30 \) observations were randomly sampled over the time interval \( t = 0 \) to \( t = 0.05 \).

The cost function \(J\) (Eq. \ref{costfuncnode}) for estimating the parameters of the neural ODE model is given by: 

\begin{equation}
 J(\theta) = \frac{1}{2} \sum_{k=1}^N \Vert x_k - \tilde{x}_k \Vert_2^2 + \alpha_1 \Vert \theta \Vert_1.   
\label{costfuncnode}
\end{equation}
The parameter vector \( \theta \), obtained by flattening \( \theta_1, \theta_2, \theta_3, b_1, \) and \( b_3 \), is represented as \( \theta \in \mathbb{R}^{3560 \times 1} \). \textcolor{black}{For experimentation, the parameters in Eq.~\ref{costfuncnode} are determined using the adjoint sensitivity method with an Euler discretization scheme.} The number of clusters is fixed at 40.
 
\begin{figure}[H]
  \centering
  \includegraphics[width=\textwidth]{./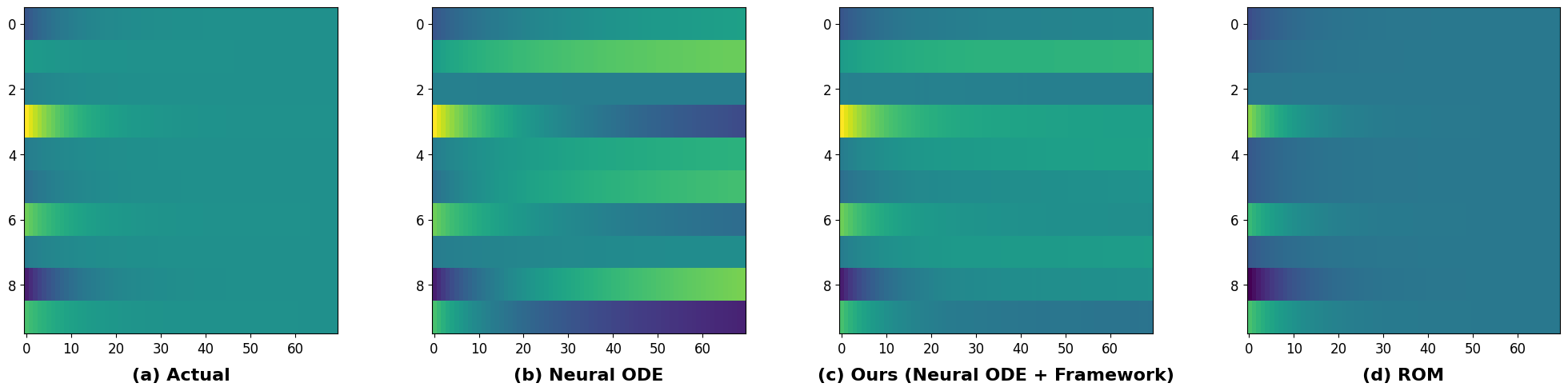}
  
  \caption{Comparison of (a) the actual solution, (b) the neural-ODE solution, (c) the neural-ODE solution with our framework, and (d) the ROM solution for a random initial condition in the case of a linear dynamical system on graphs ($\mathbb{D}$ in Section \ref{spatio-temporal}).}
  \label{allcomparenode}
\end{figure}

  
\begin{figure}
\begin{subfigure}{0.45\textwidth}
    \includegraphics[width=\linewidth, height = 5cm]{./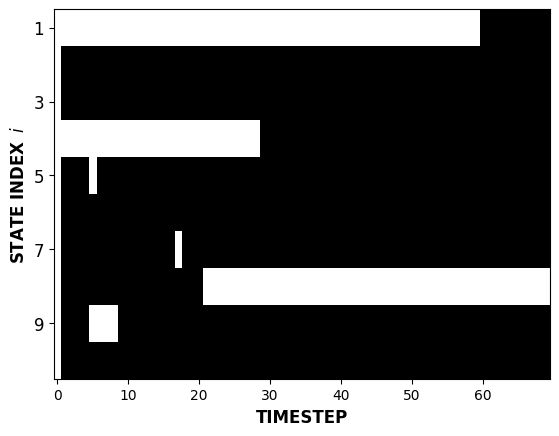}
    \caption{Comparing absolute value of errors of neural-ODE solution with our framework and the neural ODE solution. Here the black regions indicate grid points where the absolute value of error in the neural-ODE solution with our framework is less than neural ODE solution.}
    \label{cmpnode1}
  \end{subfigure}
\hfill
\begin{subfigure}{0.45\textwidth}
    \includegraphics[width=\linewidth, height = 5cm]{./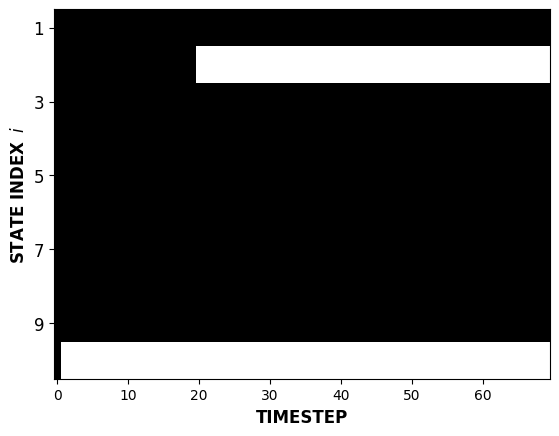}
    \caption{Comparing absolute value of errors of neural-ODE solution with our framework and the ROM solution. Here the black regions indicate grid points where the absolute value of error in the neural-ODE solution with our framework is less than ROM solution.}
    \label{cmpnode2}
  \end{subfigure}

  \caption{Comparison of absolute errors between the neural-ODE surrogate model with our framework, the standalone neural-ODE model, and the ROM solution. Black regions indicate grid points where the solutions given by the framework achieves a lower absolute error than the other methods.}
  \label{comparenode}
\end{figure}

\textcolor{black}{Figure~\ref{allcomparenode} compares the neural-ODE solution for a random initial condition with the ROM and neural-ODE solution with our framework. The neural-ODE solution (Figure~\hyperref[allcomparenode]{5(b)}) exhibits higher noise levels compared to the actual solution (Figure~\hyperref[allcomparenode]{5(a)}). Figure~\ref{comparenode} contrasts the ROM solution (Figure~\hyperref[allcomparenode]{5(d)}) of the linear dynamical system (Equation~\ref{nodesystem}) with the neural-ODE solution with our framework (Figure~\hyperref[allcomparenode]{5(c)}). In the grid-wise comparison of solutions, the black regions in Figures~\ref{cmpnode1} and \ref{cmpnode2} indicate areas where the error in the neural-ODE solution with our framework is lower than that in the neural-ODE and ROM solutions. Notably, the neural-ODE solution with our framework exhibits a greater number of black regions, suggesting a closer resemblance to the actual solution compared to the ROM and neural ODE solutions.
}

 \section{Conclusion and Discussion}
\textcolor{black}{
In this study, we proposed a novel framework to improve the robustness of surrogate models against input perturbations, addressing a key challenge in modeling complex systems while ensuring computational efficiency. By integrating machine learning techniques with stochastic filtering, our approach has demonstrated significant improvements in surrogate model accuracy under perturbations. The experimental results presented in Section~\ref{numerical_results} provide strong validation of the framework’s effectiveness. } 

The versatility of our framework is evident from its application to dynamical systems represented on graphs and its extension to a general setting, where a neural-ODE-based surrogate model was employed to simulate complex physical phenomena. This not only highlights the robustness of our approach but also underscores its applicability across diverse domains, as discussed in Section~\ref{benchmark}.  

Despite its strengths, our framework has certain limitations. Notably, its reliance on stochastic filtering methods introduces computational overhead and may also be susceptible to filter divergence in certain cases. While our study focuses on undirected graphs, many real-world complex systems involve directed networks, such as those governed by the complex Ginzburg-Landau equation~\cite{cencetti_clusella_fanelli_2018}. Addressing directed graph structures presents an important avenue for future research, with potential extensions of our framework to handle these more intricate dynamical systems.

\section*{Acknowledgments}
We thank Prof. S. Lakshmivarahan and Prof. Arun Tangirala for their insightful feedback, contributing to the enhancement of this work. \textcolor{black}{We thank the anonymous reviewers for their insightful comments, which significantly improved the quality of the manuscript.} This work was partially supported by the MATRIX grant MTR/2020/000186 of the Science and Engineering Research Board of India.
\bibliographystyle{unsrt} 
\bibliography{bibfile} 

\end{document}